\theoremstyle{definition}
\newtheorem{lemma}{Lemma}
\theoremstyle{remark}
\definecolor{blk}{RGB}{63,63,63}
\newcommand*{\mybox}[1]{%
  \framebox{\raisebox{0cm}[0.5\baselineskip][0.05\baselineskip]{%
    \hbox to 0.10cm {\hss#1\hss}}}\hspace{0.05cm}}
\begin{document}
\title{Printing Protocol: Physical ZKPs for Decomposition Puzzles\thanks{A preliminary version of this paper \cite{fivecells} has appeared at LATINCRYPT 2023.}}
\author[1]{Suthee Ruangwises\thanks{\texttt{ruangwises@uec.ac.jp}}}
\author[1]{Mitsugu Iwamoto\thanks{\texttt{mitsugu@uec.ac.jp}}}
\affil[1]{Department of Informatics, The University of Electro-Communications, Tokyo, Japan}
\date{}
\maketitle

\begin{abstract}
Decomposition puzzles are pencil-and-paper logic puzzles that involve partitioning a rectangular grid into several regions to satisfy certain rules. In this paper, we construct a generic card-based protocol called \textit{printing protocol}, which can be used to physically verify solutions of decompositon puzzles. We apply the printing protocol to develop card-based zero-knowledge proof protocols for two such puzzles: Five Cells and Meadows. These protocols allow a prover to physically show that he/she knows solutions of the puzzles without revealing them.

\textbf{Keywords:} zero-knowledge proof, card-based cryptography, Five Cells, Meadows, puzzle
\end{abstract}

\section{Introduction}
Pencil puzzles are puzzles that can be solved by writing down the solution on a paper. They must be solved using only logical reasoning and do not require additional knowledge. Examples of pencil puzzles include Sudoku, Nonogram, Kakuro, and Slitherlink.

Pencil puzzles can be categorized into several types based on their core theme. One of the common themes is to partition a rectangular grid into several regions to satisfy certain rules. We call these puzzles \textit{decomposition puzzles}.

\subsection{Five Cells}
\textit{Five Cells} is a decomposition puzzle created by Nikoli, a Japanese publisher famous for developing many popular pencil puzzles. The puzzle consists of an $m \times n$ rectangular grid, with some cells containing a number. The player has to partition the grid into pentominoes. The number in each cell indicates the number of edges of that cell that are borders of pentominoes (including the outer boundary of the grid) \cite{nikoli}. See Fig. \ref{fig0}.

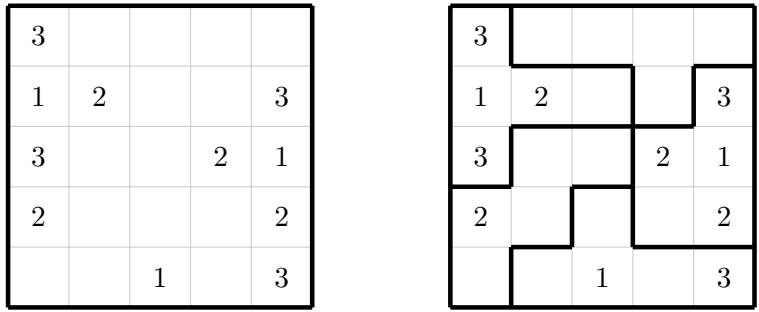
\begin{figure}
\centering
\begin{tikzpicture}
\draw[step=0.8cm,color={rgb:black,1;white,4}] (0,0) grid (4,4);

\draw[line width=0.6mm] (0,0) -- (0,4);
\draw[line width=0.6mm] (4,0) -- (4,4);
\draw[line width=0.6mm] (0,0) -- (4,0);
\draw[line width=0.6mm] (0,4) -- (4,4);

\node at (0.4,1.2) {2};
\node at (0.4,2) {3};
\node at (0.4,2.8) {1};
\node at (0.4,3.6) {3};
\node at (1.2,2.8) {2};
\node at (2,0.4) {1};
\node at (2.8,2) {2};
\node at (3.6,0.4) {3};
\node at (3.6,1.2) {2};
\node at (3.6,2) {1};
\node at (3.6,2.8) {3};
\end{tikzpicture}
\hspace{1.5cm}
\begin{tikzpicture}
\draw[step=0.8cm,color={rgb:black,1;white,4}] (0,0) grid (4,4);

\draw[line width=0.6mm] (0,0) -- (0,4);
\draw[line width=0.6mm] (0.8,0) -- (0.8,0.8);
\draw[line width=0.6mm] (0.8,1.6) -- (0.8,2.4);
\draw[line width=0.6mm] (0.8,3.2) -- (0.8,4);
\draw[line width=0.6mm] (1.6,0.8) -- (1.6,1.6);
\draw[line width=0.6mm] (2.4,0.8) -- (2.4,3.2);
\draw[line width=0.6mm] (3.2,2.4) -- (3.2,3.2);
\draw[line width=0.6mm] (4,0) -- (4,4);
\draw[line width=0.6mm] (0,0) -- (4,0);
\draw[line width=0.6mm] (0.8,0.8) -- (1.6,0.8);
\draw[line width=0.6mm] (2.4,0.8) -- (4,0.8);
\draw[line width=0.6mm] (0,1.6) -- (0.8,1.6);
\draw[line width=0.6mm] (1.6,1.6) -- (2.4,1.6);
\draw[line width=0.6mm] (0.8,2.4) -- (3.2,2.4);
\draw[line width=0.6mm] (0.8,3.2) -- (2.4,3.2);
\draw[line width=0.6mm] (3.2,3.2) -- (4,3.2);
\draw[line width=0.6mm] (0,4) -- (4,4);

\node at (0.4,1.2) {2};
\node at (0.4,2) {3};
\node at (0.4,2.8) {1};
\node at (0.4,3.6) {3};
\node at (1.2,2.8) {2};
\node at (2,0.4) {1};
\node at (2.8,2) {2};
\node at (3.6,0.4) {3};
\node at (3.6,1.2) {2};
\node at (3.6,2) {1};
\node at (3.6,2.8) {3};
\end{tikzpicture}
\caption{An example of a $5 \times 5$ Five Cells puzzle (left) and its solution (right)}
\label{fig0}
\end{figure}

Deciding solvability of a given Five Cells puzzle is NP-complete \cite{np}.

\subsection{Meadows}
\textit{Meadows} is another decomposition puzzle consisting of an $n \times n$ square grid, with some cells containing a dot. The player has to partition the grid into squares such that each square contains exactly one dotted cell. See Fig. \ref{fig00}.

The rules of Meadows are similar to those of \textit{Shikaku}\footnote{Although there exists a card-based ZKP for Shikaku \cite{shikaku}, the protocol uses a technique specifically designed for Shikaku, which cannot be applied to Meadows.}, a more well-known decomposition puzzle developed by Nikoli, so the puzzle may be considered a variant of Shikaku \cite{janko}, which is also NP-complete \cite{np2}.

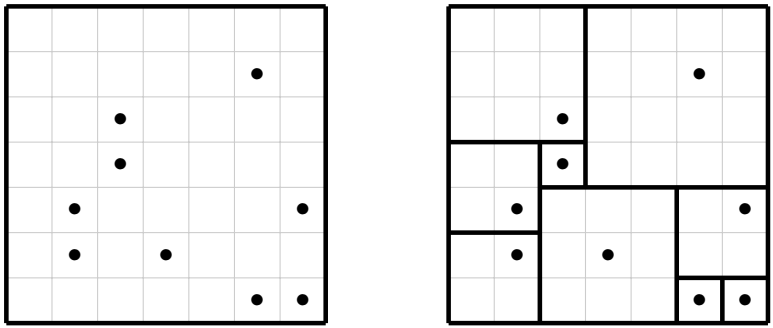
\begin{figure}
\centering
\begin{tikzpicture}
\draw[step=0.6cm,color={rgb:black,1;white,4}] (0,0) grid (4.2,4.2);

\draw[line width=0.6mm] (0,0) -- (0,4.2);
\draw[line width=0.6mm] (4.2,0) -- (4.2,4.2);
\draw[line width=0.6mm] (0,0) -- (4.2,0);
\draw[line width=0.6mm] (0,4.2) -- (4.2,4.2);

\node at (0.9,0.9) {$\bullet$};
\node at (0.9,1.5) {$\bullet$};
\node at (1.5,2.1) {$\bullet$};
\node at (1.5,2.7) {$\bullet$};
\node at (2.1,0.9) {$\bullet$};
\node at (3.3,0.3) {$\bullet$};
\node at (3.3,3.3) {$\bullet$};
\node at (3.9,0.3) {$\bullet$};
\node at (3.9,1.5) {$\bullet$};
\end{tikzpicture}
\hspace{1.3cm}
\begin{tikzpicture}
\draw[step=0.6cm,color={rgb:black,1;white,4}] (0,0) grid (4.2,4.2);

\draw[line width=0.6mm] (0,0) -- (0,4.2);
\draw[line width=0.6mm] (4.2,0) -- (4.2,4.2);
\draw[line width=0.6mm] (0,0) -- (4.2,0);
\draw[line width=0.6mm] (0,4.2) -- (4.2,4.2);

\node at (0.9,0.9) {$\bullet$};
\node at (0.9,1.5) {$\bullet$};
\node at (1.5,2.1) {$\bullet$};
\node at (1.5,2.7) {$\bullet$};
\node at (2.1,0.9) {$\bullet$};
\node at (3.3,0.3) {$\bullet$};
\node at (3.3,3.3) {$\bullet$};
\node at (3.9,0.3) {$\bullet$};
\node at (3.9,1.5) {$\bullet$};

\draw[line width=0.6mm] (1.2,0) -- (1.2,2.4);
\draw[line width=0.6mm] (1.8,1.8) -- (1.8,4.2);
\draw[line width=0.6mm] (3,0) -- (3,1.8);
\draw[line width=0.6mm] (3.6,0) -- (3.6,0.6);
\draw[line width=0.6mm] (3,0.6) -- (4.2,0.6);
\draw[line width=0.6mm] (0,1.2) -- (1.2,1.2);
\draw[line width=0.6mm] (1.2,1.8) -- (4.2,1.8);
\draw[line width=0.6mm] (0,2.4) -- (1.8,2.4);
\end{tikzpicture}
\caption{An example of a $7 \times 7$ Meadows puzzle (left) and its solution (right)}
\label{fig00}
\end{figure}

\subsection{Zero-Knowledge Proof}
Suppose that Panthalassa, a puzzle expert, creates a pencil puzzle and challenges his friend Vodka to solve it. He wants to convince her that the puzzle indeed has a solution, but does not want to reveal the solution itself. A \textit{zero-knowledge proof (ZKP)} makes this seemingly difficult task possible.

Formally, a ZKP is an interactive protocol between a prover $P$ and a verifier $V$, where both are given a computational problem $x$, but only $P$ knows its solution $w$. Also, the computational power of $V$ is limited, so $V$ cannot obtain $w$ from $x$. A ZKP has to satisfy the following three properties.

\begin{enumerate}
	\item \textbf{Completeness:} If $P$ knows $w$, then $V$ accepts with high probability. (In this paper, we consider the \textit{perfect completeness} property where $V$ always accepts.)
	\item \textbf{Soundness:} If $P$ does not know $w$, then $V$ rejects with high probability. (In this paper, we consider the \textit{perfect soundness} property where $V$ always rejects.)
	\item \textbf{Zero-knowledge:} $V$ learns nothing about $w$. Formally, there exists a probabilistic polynomial time algorithm $S$ (called a \textit{simulator}) that does not know $w$ but has an access to $V$, and the outputs of $S$ follow the same probability distribution as the ones of the real protocol.
\end{enumerate}

The concept of a ZKP was first introduced in 1989 by Goldwasser et al. \cite{zkp0}. It has been proved that every NP problem has a ZKP \cite{zkp}, so a computational ZKP for any NP puzzle can be constructed via a reduction to another problem. Such construction, however, is unintuitive and looks unconvincing. Therefore, many researchers instead focused on constructing physical ZKPs using a deck of playing cards. These card-based protocols have benefits that they require only portable objects easily found in everyday life and do not require computers. They are also easy to understand and verify the correctness and security, even for non-experts, and thus can be used for didactic purpose.

Card-based ZKP protocols for many pencil puzzles have been developed\footnote{Among these puzzles, only Shikaku is a decomposition puzzle, and its ZKP protocol \cite{shikaku} uses a completely different approach from the one in this paper.}, including Sudoku \cite{sudoku0,sudoku2,sudoku}, Nonogram \cite{nonogram,nonogram2}, Akari \cite{akari}, Takuzu \cite{akari,takuzu}, Kakuro \cite{akari,kakuro}, KenKen \cite{akari}, Makaro \cite{makaro,makaro2}, Norinori \cite{norinori}, Slitherlink \cite{slitherlink}, Juosan \cite{takuzu}, Numberlink \cite{numberlink}, Suguru \cite{suguru}, Ripple Effect \cite{ripple}, Nurikabe \cite{nurikabe}, Hitori \cite{nurikabe}, Bridges \cite{bridges}, Masyu \cite{slitherlink}, Heyawake \cite{nurikabe}, Shikaku \cite{shikaku}, Usowan \cite{usowan}, Nurimisaki \cite{nurimisaki}, ABC End View \cite{abc,goishi}, Goishi Hiroi \cite{goishi}, Moon-or-Sun \cite{moon}, and Kurodoko \cite{nurimisaki}, as well as those for non-pencil logic puzzles such as Cryptarithmetic \cite{crypta} and Ball sort puzzle \cite{ball}.

\subsection{Our Contribution}
In this paper, we construct a generic card-based protocol called \textit{printing protocol}. This protocol prints some numbers from the template onto a target area from the grid, and also verifies that the printed numbers do not overlap with the already existing numbers in that area. The printing protocol can be used to verify solutions of many decomposition puzzles. We show two applications of it by developing ZKP protocols for two such puzzles: Five Cells and Meadows.

\section{Preliminaries}
\subsection{Cards} \label{encoding}
Each card used in our protocol either has an integer written on the front side (e.g. \hbox{\mybox{1},} \mybox{2}) or is a blank card with nothing written on the front side (\mybox{}). All cards have indistinguishable back sides denoted by \mybox{?}.

\subsection{Pile-Shifting Shuffle}
Given a matrix $M$ of cards, a \textit{pile-shifting shuffle} \cite{polygon} shifts the columns of $M$ by a uniformly random cyclic shift unknown to all parties (see Fig. \ref{fig1}). It can be implemented in real world by putting all cards in each column into an envelope, and letting all parties take turns to apply \textit{Hindu cuts} (taking some envelopes from the bottom and putting them on the top) to the pile of envelopes.

Note that each card in $M$ may be replaced by a stack of cards, and the protocol still works in the same way as long as every stack in the same row has the same number of cards.

\begin{figure}
\centering
\begin{tikzpicture}
\node at (0,0) {\mybox{?}};
\node at (0.5,0) {\mybox{?}};
\node at (1,0) {\mybox{?}};
\node at (1.5,0) {\mybox{?}};
\node at (2,0) {\mybox{?}};
\node at (2.5,0) {\mybox{?}};

\node at (0,0.6) {\mybox{?}};
\node at (0.5,0.6) {\mybox{?}};
\node at (1,0.6) {\mybox{?}};
\node at (1.5,0.6) {\mybox{?}};
\node at (2,0.6) {\mybox{?}};
\node at (2.5,0.6) {\mybox{?}};

\node at (0,1.2) {\mybox{?}};
\node at (0.5,1.2) {\mybox{?}};
\node at (1,1.2) {\mybox{?}};
\node at (1.5,1.2) {\mybox{?}};
\node at (2,1.2) {\mybox{?}};
\node at (2.5,1.2) {\mybox{?}};

\node at (0,1.8) {\mybox{?}};
\node at (0.5,1.8) {\mybox{?}};
\node at (1,1.8) {\mybox{?}};
\node at (1.5,1.8) {\mybox{?}};
\node at (2,1.8) {\mybox{?}};
\node at (2.5,1.8) {\mybox{?}};

\node at (0,2.4) {\mybox{?}};
\node at (0.5,2.4) {\mybox{?}};
\node at (1,2.4) {\mybox{?}};
\node at (1.5,2.4) {\mybox{?}};
\node at (2,2.4) {\mybox{?}};
\node at (2.5,2.4) {\mybox{?}};

\node at (-0.4,0) {5};
\node at (-0.4,0.6) {4};
\node at (-0.4,1.2) {3};
\node at (-0.4,1.8) {2};
\node at (-0.4,2.4) {1};

\node at (0,2.9) {1};
\node at (0.5,2.9) {2};
\node at (1,2.9) {3};
\node at (1.5,2.9) {4};
\node at (2,2.9) {5};
\node at (2.5,2.9) {6};

\node at (3.4,1.2) {\LARGE{$\Rightarrow$}};
\end{tikzpicture}
\begin{tikzpicture}
\node at (0,0) {\mybox{?}};
\node at (0.5,0) {\mybox{?}};
\node at (1,0) {\mybox{?}};
\node at (1.5,0) {\mybox{?}};
\node at (2,0) {\mybox{?}};
\node at (2.5,0) {\mybox{?}};

\node at (0,0.6) {\mybox{?}};
\node at (0.5,0.6) {\mybox{?}};
\node at (1,0.6) {\mybox{?}};
\node at (1.5,0.6) {\mybox{?}};
\node at (2,0.6) {\mybox{?}};
\node at (2.5,0.6) {\mybox{?}};

\node at (0,1.2) {\mybox{?}};
\node at (0.5,1.2) {\mybox{?}};
\node at (1,1.2) {\mybox{?}};
\node at (1.5,1.2) {\mybox{?}};
\node at (2,1.2) {\mybox{?}};
\node at (2.5,1.2) {\mybox{?}};

\node at (0,1.8) {\mybox{?}};
\node at (0.5,1.8) {\mybox{?}};
\node at (1,1.8) {\mybox{?}};
\node at (1.5,1.8) {\mybox{?}};
\node at (2,1.8) {\mybox{?}};
\node at (2.5,1.8) {\mybox{?}};

\node at (0,2.4) {\mybox{?}};
\node at (0.5,2.4) {\mybox{?}};
\node at (1,2.4) {\mybox{?}};
\node at (1.5,2.4) {\mybox{?}};
\node at (2,2.4) {\mybox{?}};
\node at (2.5,2.4) {\mybox{?}};

\node at (-0.4,0) {5};
\node at (-0.4,0.6) {4};
\node at (-0.4,1.2) {3};
\node at (-0.4,1.8) {2};
\node at (-0.4,2.4) {1};

\node at (0,2.9) {5};
\node at (0.5,2.9) {6};
\node at (1,2.9) {1};
\node at (1.5,2.9) {2};
\node at (2,2.9) {3};
\node at (2.5,2.9) {4};
\end{tikzpicture}
\caption{A pile-shifting shuffle on a $5 \times 6$ matrix}
\label{fig1}
\end{figure}
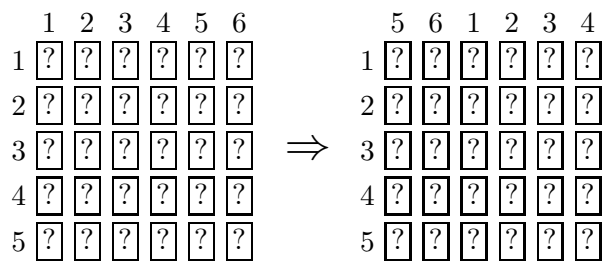

\subsection{Chosen Cut Protocol} \label{chosen}
Given a sequence $C = (c_1,c_2,...,c_q)$ of $q$ face-down cards, a \textit{chosen cut protocol} \cite{koch} allows $P$ to select a desired card $c_i$ to use in other protocols without revealing $i$ to $V$. Optionally, the protocol can also revert $C$ back to its original state after $P$ finishes using $c_i$.

\begin{figure}
\centering
\begin{tikzpicture}
\node at (0.0,2.4) {\mybox{?}};
\node at (0.6,2.4) {\mybox{?}};
\node at (1.2,2.4) {...};
\node at (1.8,2.4) {\mybox{?}};
\node at (2.4,2.4) {\mybox{?}};
\node at (3.0,2.4) {\mybox{?}};
\node at (3.6,2.4) {...};
\node at (4.2,2.4) {\mybox{?}};

\node at (0.0,2) {$c_1$};
\node at (0.6,2) {$c_2$};
\node at (1.8,2) {$c_{i-1}$};
\node at (2.4,2) {$c_i$};
\node at (3.0,2) {$c_{i+1}$};
\node at (4.2,2) {$c_q$};

\node at (0.0,1.4) {\mybox{?}};
\node at (0.6,1.4) {\mybox{?}};
\node at (1.2,1.4) {...};
\node at (1.8,1.4) {\mybox{?}};
\node at (2.4,1.4) {\mybox{?}};
\node at (3.0,1.4) {\mybox{?}};
\node at (3.6,1.4) {...};
\node at (4.2,1.4) {\mybox{?}};

\node at (0.0,1) {0};
\node at (0.6,1) {0};
\node at (1.8,1) {0};
\node at (2.4,1) {1};
\node at (3.0,1) {0};
\node at (4.2,1) {0};

\node at (0.0,0.4) {\mybox{1}};
\node at (0.6,0.4) {\mybox{0}};
\node at (1.2,0.4) {...};
\node at (1.8,0.4) {\mybox{0}};
\node at (2.4,0.4) {\mybox{0}};
\node at (3.0,0.4) {\mybox{0}};
\node at (3.6,0.4) {...};
\node at (4.2,0.4) {\mybox{0}};
\end{tikzpicture}
\caption{A $3 \times q$ matrix $M$ constructed in Step 1 of the chosen cut protocol}
\label{fig2}
\end{figure}
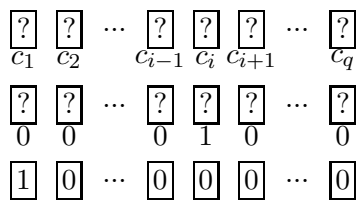

$P$ performs the following steps (called the \textit{first part}).
\begin{enumerate}
	\item Construct the following $3 \times q$ matrix $M$ (see Fig. \ref{fig2}).
	\begin{enumerate}
		\item In Row 1, place the sequence $C$.
		\item In Row 2, place a face-down \mybox{1} at Column $i$ and a face-down \mybox{0} at each other column.
		\item In Row 3, place a face-up \mybox{1} at Column 1 and a face-up \mybox{0} at each other column.
	\end{enumerate}
	\item Turn over all face-up cards. Apply the pile-shifting shuffle to $M$.
	\item Turn over all cards in Row 2 and locate the position of the only \mybox{1}. A card in Row 1 directly above this \mybox{1} will be the card $c_i$ as desired.
\end{enumerate}

Optionally, if $P$ wants to revert $C$ back to its original state, $P$ continues the following steps (called the \textit{second part}) after finishing using $c_i$ in other protocols.

\begin{enumerate}
	\setcounter{enumi}{3}
	\item Place $c_i$ back into $M$ at the same position.
	\item Turn over all face-up cards. Apply the pile-shifting shuffle to $M$.
	\item Turn over all cards in Row 3 and locate the position of the only \mybox{1}. Shift the columns of $M$ cyclically such that this \mybox{1} moves to Column 1. $M$ is now reverted back to its original state.
\end{enumerate}

Note that each card in $C$ may be replaced by a stack of cards, and the protocol still works in the same way as long as every stack has the same number of cards.

\section{Printing Protocol} \label{print}
Given a $p \times q$ matrix of cards called a \textit{template} (which contains some non-blank cards, and possibly some blank cards) and another $p \times q$ matrix of cards representing an area from the puzzle grid, all known to $P$ but not to $V$. A printing protocol verifies that positions in the area corresponding to non-blank cards in the template are initially empty (consisting of all blank cards). The protocol then places all non-blank cards from the template at the corresponding positions in the area, replacing the original blank cards (see Fig. \ref{fig3}) without revealing any card to $V$.

\begin{figure}
\centering
\begin{tikzpicture}
\node at (0,0) {\mybox{}};
\node at (0.5,0) {\mybox{}};
\node at (1,0) {\mybox{}};
\node at (1.5,0) {\mybox{}};
\node at (2,0) {\mybox{}};

\node at (0,0.6) {\mybox{}};
\node at (0.5,0.6) {\mybox{1}};
\node at (1,0.6) {\mybox{2}};
\node at (1.5,0.6) {\mybox{3}};
\node at (2,0.6) {\mybox{}};

\node at (0,1.2) {\mybox{}};
\node at (0.5,1.2) {\mybox{}};
\node at (1,1.2) {\mybox{}};
\node at (1.5,1.2) {\mybox{4}};
\node at (2,1.2) {\mybox{}};

\node at (0,1.8) {\mybox{}};
\node at (0.5,1.8) {\mybox{}};
\node at (1,1.8) {\mybox{}};
\node at (1.5,1.8) {\mybox{}};
\node at (2,1.8) {\mybox{}};

\node at (1,-0.5) {Template};

\node at (2.7,0.9) {\LARGE{+}};
\end{tikzpicture}
\begin{tikzpicture}
\node at (0,0) {\mybox{}};
\node at (0.5,0) {\mybox{}};
\node at (1,0) {\mybox{0}};
\node at (1.5,0) {\mybox{}};
\node at (2,0) {\mybox{}};

\node at (0,0.6) {\mybox{}};
\node at (0.5,0.6) {\mybox{}};
\node at (1,0.6) {\mybox{}};
\node at (1.5,0.6) {\mybox{}};
\node at (2,0.6) {\mybox{1}};

\node at (0,1.2) {\mybox{0}};
\node at (0.5,1.2) {\mybox{}};
\node at (1,1.2) {\mybox{5}};
\node at (1.5,1.2) {\mybox{}};
\node at (2,1.2) {\mybox{2}};

\node at (0,1.8) {\mybox{}};
\node at (0.5,1.8) {\mybox{1}};
\node at (1,1.8) {\mybox{}};
\node at (1.5,1.8) {\mybox{3}};
\node at (2,1.8) {\mybox{}};

\node at (1,-0.5) {Area};

\node at (2.8,0.9) {\LARGE{$\Rightarrow$}};
\end{tikzpicture}
\begin{tikzpicture}
\node at (0,0) {\mybox{}};
\node at (0.5,0) {\mybox{}};
\node at (1,0) {\mybox{0}};
\node at (1.5,0) {\mybox{}};
\node at (2,0) {\mybox{}};

\node at (0,0.6) {\mybox{}};
\node at (0.5,0.6) {\mybox{1}};
\node at (1,0.6) {\mybox{2}};
\node at (1.5,0.6) {\mybox{3}};
\node at (2,0.6) {\mybox{1}};

\node at (0,1.2) {\mybox{0}};
\node at (0.5,1.2) {\mybox{}};
\node at (1,1.2) {\mybox{5}};
\node at (1.5,1.2) {\mybox{4}};
\node at (2,1.2) {\mybox{2}};

\node at (0,1.8) {\mybox{}};
\node at (0.5,1.8) {\mybox{1}};
\node at (1,1.8) {\mybox{}};
\node at (1.5,1.8) {\mybox{3}};
\node at (2,1.8) {\mybox{}};

\node at (1,-0.5) {Area};
\end{tikzpicture}

\begin{tikzpicture}
\node at (0,0) {\mybox{}};
\node at (0.5,0) {\mybox{}};
\node at (1,0) {\mybox{}};
\node at (1.5,0) {\mybox{}};
\node at (2,0) {\mybox{}};

\node at (0,0.6) {\mybox{}};
\node at (0.5,0.6) {\mybox{1}};
\node at (1,0.6) {\mybox{2}};
\node at (1.5,0.6) {\mybox{3}};
\node at (2,0.6) {\mybox{}};

\node at (0,1.2) {\mybox{}};
\node at (0.5,1.2) {\mybox{}};
\node at (1,1.2) {\mybox{}};
\node at (1.5,1.2) {\mybox{4}};
\node at (2,1.2) {\mybox{}};

\node at (0,1.8) {\mybox{}};
\node at (0.5,1.8) {\mybox{}};
\node at (1,1.8) {\mybox{}};
\node at (1.5,1.8) {\mybox{}};
\node at (2,1.8) {\mybox{}};

\node at (1,-0.5) {Template};

\node at (2.7,0.9) {\LARGE{+}};
\end{tikzpicture}
\begin{tikzpicture}
\node at (0,0) {\mybox{}};
\node at (0.5,0) {\mybox{}};
\node at (1,0) {\mybox{0}};
\node at (1.5,0) {\mybox{}};
\node at (2,0) {\mybox{}};

\node at (0,0.6) {\mybox{}};
\node at (0.5,0.6) {\mybox{}};
\node at (1,0.6) {\mybox{1}};
\node at (1.5,0.6) {\mybox{}};
\node at (2,0.6) {\mybox{1}};

\node at (0,1.2) {\mybox{0}};
\node at (0.5,1.2) {\mybox{}};
\node at (1,1.2) {\mybox{5}};
\node at (1.5,1.2) {\mybox{}};
\node at (2,1.2) {\mybox{2}};

\node at (0,1.8) {\mybox{}};
\node at (0.5,1.8) {\mybox{1}};
\node at (1,1.8) {\mybox{}};
\node at (1.5,1.8) {\mybox{3}};
\node at (2,1.8) {\mybox{}};

\node at (1,-0.5) {Area};

\node at (2.8,0.9) {\LARGE{$\Rightarrow$}};
\end{tikzpicture}
\begin{tikzpicture}
\node at (-0.15,-0.65) {};
\node at (2.15,1.95) {};

\node at (1,0.9) {\LARGE{Reject}};
\end{tikzpicture}
\caption{Examples of a succesful printing of numbers from a $4 \times 5$ template onto a $4 \times 5$ area (top) and an unsuccessful printing because of overlapping numbers (bottom)}
\label{fig3}
\end{figure}
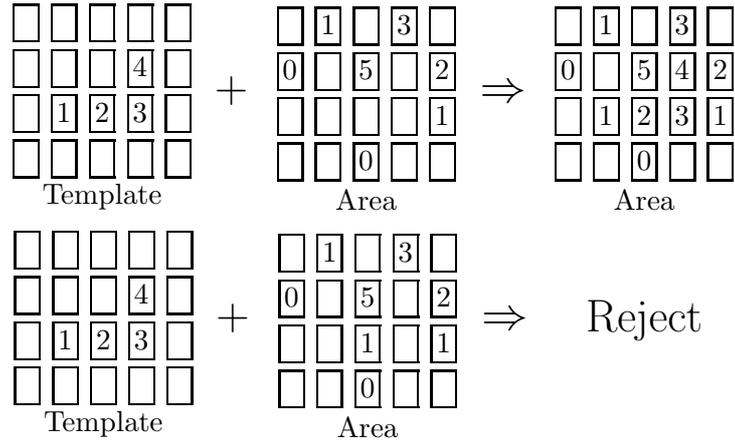

$P$ performs the following steps.
\begin{enumerate}
	\item Place each card from the template on top of each corresponding card from the area, creating $pq$ stacks of two cards.
	\item For each of the $pq$ stacks, perform the following steps.
	\begin{enumerate}
		\item Apply the first part of the chosen cut protocol to select a blank card. (If the preconditions are met, at least one card must be blank; if both cards are blank, $P$ can select any of them.)
		\item Reveal the selected card that it is a blank card (otherwise $V$ rejects) and remove it from the stack.
	\end{enumerate}
\end{enumerate}

After these steps, all non-blank cards from the template are placed at the corresponding positions in the area. $V$ is also convinced that these positions in the area were initially empty (consisting of all blank cards) before the protocol.

Using the printing protocol, $P$ can separately print each region in the partition onto the grid to construct $P$'s solution of a decomposition puzzle. $V$ will be convinced that the regions do not overlap with one another.

\section{ZKP Protocol for Five Cells} \label{fivecells}
First, from the solution of Five Cells, one can fill a number on every cell according to the rule of the puzzle (the number in each cell being the number of edges of that cell that are borders of pentominoes). We call this instance an \textit{extended solution} of the puzzle.

The key observation is that there are only $\Theta(1)$ different types of pentomino. Namely, there are 63 of them \cite{oeis}.\footnote{A pentomino obtained by rotating or reflecting another pentomino is considered a different one.} Furthermore, inside each type of pentomino in the extended solution, the number in each cell is fixed for that type. Therefore, we need to construct only 63 different templates.

In our protocol, $P$ constructs 63 templates, one for each type of pentomino. Each template has size $5 \times 5$ (as the height and length of a pentomino are at most five), and the pentomino is placed at the top-left corner of the template. A cell inside the pentomino is represented by a card with a number equal to the number of edges of that cell that are borders of the pentomino, while a cell outside the pentomino is represented by a blank card (see Fig. \ref{figA}).

\begin{figure}
\centering
\begin{tikzpicture}
\draw[color={rgb:black,1;white,4}] (0.8,0.8) -- (0.8,1.6);
\draw[color={rgb:black,1;white,4}] (1.6,0.8) -- (1.6,1.6);

\draw[color={rgb:black,1;white,4}] (0.8,0.8) -- (1.6,0.8);
\draw[color={rgb:black,1;white,4}] (0.8,1.6) -- (1.6,1.6);

\draw[line width=0.6mm] (0,0.8) -- (0,1.6);
\draw[line width=0.6mm] (0.8,0) -- (0.8,0.8);
\draw[line width=0.6mm] (0.8,1.6) -- (0.8,2.4);
\draw[line width=0.6mm] (1.6,0) -- (1.6,0.8);
\draw[line width=0.6mm] (1.6,1.6) -- (1.6,2.4);
\draw[line width=0.6mm] (2.4,0.8) -- (2.4,1.6);

\draw[line width=0.6mm] (0.8,0) -- (1.6,0);
\draw[line width=0.6mm] (0,0.8) -- (0.8,0.8);
\draw[line width=0.6mm] (1.6,0.8) -- (2.4,0.8);
\draw[line width=0.6mm] (0,1.6) -- (0.8,1.6);
\draw[line width=0.6mm] (1.6,1.6) -- (2.4,1.6);
\draw[line width=0.6mm] (0.8,2.4) -- (1.6,2.4);

\node at (0.4,1.2) {3};
\node at (1.2,0.4) {3};
\node at (1.2,1.2) {0};
\node at (1.2,2) {3};
\node at (2,1.2) {3};

\node at (0,1) {};
\node at (3,-0.2) {};

\node at (3.1,1.2) {\LARGE{$\Rightarrow$}};
\end{tikzpicture}
\begin{tikzpicture}
\node at (0,0) {\mybox{}};
\node at (0.5,0) {\mybox{}};
\node at (1,0) {\mybox{}};
\node at (1.5,0) {\mybox{}};
\node at (2,0) {\mybox{}};

\node at (0,0.6) {\mybox{}};
\node at (0.5,0.6) {\mybox{}};
\node at (1,0.6) {\mybox{}};
\node at (1.5,0.6) {\mybox{}};
\node at (2,0.6) {\mybox{}};

\node at (0,1.2) {\mybox{}};
\node at (0.5,1.2) {\mybox{3}};
\node at (1,1.2) {\mybox{}};
\node at (1.5,1.2) {\mybox{}};
\node at (2,1.2) {\mybox{}};

\node at (0,1.8) {\mybox{3}};
\node at (0.5,1.8) {\mybox{0}};
\node at (1,1.8) {\mybox{3}};
\node at (1.5,1.8) {\mybox{}};
\node at (2,1.8) {\mybox{}};

\node at (0,2.4) {\mybox{}};
\node at (0.5,2.4) {\mybox{3}};
\node at (1,2.4) {\mybox{}};
\node at (1.5,2.4) {\mybox{}};
\node at (2,2.4) {\mybox{}};
\end{tikzpicture}

\begin{tikzpicture}
\draw[color={rgb:black,1;white,4}] (1.2,0.8) -- (1.2,2.4);

\draw[color={rgb:black,1;white,4}] (0.4,0.8) -- (1.2,0.8);
\draw[color={rgb:black,1;white,4}] (0.4,1.6) -- (2,1.6);

\draw[line width=0.6mm] (0.4,0) -- (0.4,2.4);
\draw[line width=0.6mm] (1.2,0) -- (1.2,0.8);
\draw[line width=0.6mm] (2,0.8) -- (2,2.4);

\draw[line width=0.6mm] (0.4,0) -- (1.2,0);
\draw[line width=0.6mm] (1.2,0.8) -- (2,0.8);
\draw[line width=0.6mm] (0.4,2.4) -- (2,2.4);

\node at (0.8,0.4) {3};
\node at (0.8,1.2) {1};
\node at (0.8,2) {2};
\node at (1.6,1.2) {2};
\node at (1.6,2) {2};

\node at (0,3) {};
\node at (3,-0.2) {};

\node at (3.1,1.2) {\LARGE{$\Rightarrow$}};
\end{tikzpicture}
\begin{tikzpicture}
\node at (0,0) {\mybox{}};
\node at (0.5,0) {\mybox{}};
\node at (1,0) {\mybox{}};
\node at (1.5,0) {\mybox{}};
\node at (2,0) {\mybox{}};

\node at (0,0.6) {\mybox{}};
\node at (0.5,0.6) {\mybox{}};
\node at (1,0.6) {\mybox{}};
\node at (1.5,0.6) {\mybox{}};
\node at (2,0.6) {\mybox{}};

\node at (0,1.2) {\mybox{3}};
\node at (0.5,1.2) {\mybox{}};
\node at (1,1.2) {\mybox{}};
\node at (1.5,1.2) {\mybox{}};
\node at (2,1.2) {\mybox{}};

\node at (0,1.8) {\mybox{1}};
\node at (0.5,1.8) {\mybox{2}};
\node at (1,1.8) {\mybox{}};
\node at (1.5,1.8) {\mybox{}};
\node at (2,1.8) {\mybox{}};

\node at (0,2.4) {\mybox{2}};
\node at (0.5,2.4) {\mybox{2}};
\node at (1,2.4) {\mybox{}};
\node at (1.5,2.4) {\mybox{}};
\node at (2,2.4) {\mybox{}};

\node at (1,3) {};
\end{tikzpicture}
\caption{Templates of the X-shaped pentomino and the P-shaped pentomino}
\label{figA}
\end{figure}
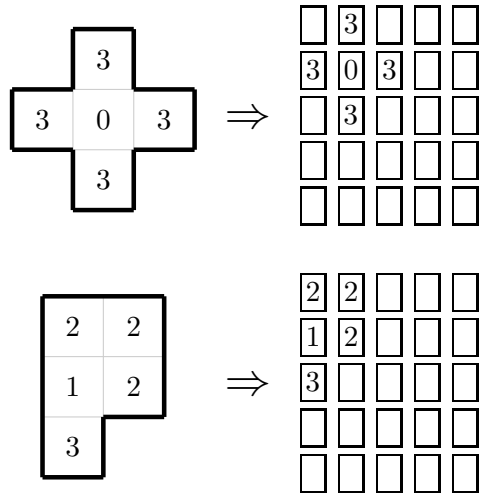

\subsection{Main Protocol}
Initially, $P$ publicly places a blank card on every cell in the grid. To handle edge cases, $P$ publicly appends four rows and four columns of ``dummy'' blank cards to the bottom and to the right of the grid. Then, $P$ turns all card face-down. We now have an $(m+4) \times (n+4)$ matrix of cards.

Observe that if we arrange all $(m+4)(n+4)$ cards in the matrix into a single sequence $A=(a_1,a_2,...,a_{(m+4)(n+4)})$, starting at the top-left corner and going from left to right in Row 1, then from left to right in Row 2, and so on, we can locate exactly where the four neighbors of any given card are. Namely, the cards on the neighbor to the left, right, top, and bottom of a cell containing $a_i$ are $a_{i-1}$, $a_{i+1}$, $a_{i-n-4}$, and $a_{i+n+4}$, respectively.

Also, $P$ constructs 63 templates of all 63 types of pentomino and lets $V$ verify that all templates are correct (otherwise $V$ rejects).

Suppose that in $P$'s extended solution, the grid is partitioned into $k$ pentominoes $B_1,B_2,...,B_k$, where $k=mn/5$. For each $i=1,2,...,k$, $P$ performs the following steps.

\begin{enumerate}
	\item Apply the first part of the chosen cut protocol to select a $5 \times 5$ area containing pentomino $B_i$ in the same position as in its corresponding template. (To be precise, $P$ selects just the top-left corner cell of the area, and the rest will follow as the chosen cut protocol preserves the cyclic order).
	\item Apply the first part of the chosen cut protocol to select a template of a pentomino with the same type as $B_i$.
	\item Apply the printing protocol on the selected template and the selected area. Continue the second part of the chosen cut protocol invoked in Step 1 to revert the grid into its original state.
	\item Reconstruct a template that has just been used and replenish the pile of templates with it. Let $V$ verify again that all 63 templates are correct (otherwise $V$ rejects). Note that $V$ does not know which template has been used.
\end{enumerate}

Finally, $P$ reveals all cards on the cells that contain a number (in the original Five Cell puzzle). $V$ verifies that the numbers on the cards match the numbers in the cells (otherwise $V$ rejects). $P$ also reveals all dummy cards that they are still blank. If all verification steps pass, then $V$ accepts.

This protocol uses $\Theta(mn)$ cards and $\Theta(mn)$ shuffles.

\subsection{Proof of Correctness and Security}
We will prove the perfect completeness, perfect soundness, and zero-knowledge properties of this protocol.

\begin{lemma}[Perfect Completeness] \label{lem1}
If $P$ knows a solution of the Five Cells puzzle, then $V$ always accepts.
\end{lemma}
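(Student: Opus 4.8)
The plan is to verify directly that, when $P$ holds a genuine solution and executes every step honestly, none of the four places where $V$ may reject is ever triggered. I would first dispose of the two bookkeeping checks. Because $P$ constructs the $63$ templates correctly and, in Step 4 of each iteration, replenishes the pile by reconstructing exactly the template just used, the template-verification checks (both the initial one and the one repeated after each printing) always pass. Likewise, since $P$ knows the solution, $P$ knows the location and type of every pentomino $B_i$; by the correctness of the chosen cut protocol (Section \ref{chosen}), the two selection steps always succeed in isolating the intended $5 \times 5$ area and a matching template without disturbing the remaining cards.

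The heart of the argument is that the printing protocol (Section \ref{print}) never rejects. Here I would invoke the defining property of a valid solution: the grid is partitioned into the pentominoes $B_1, \dots, B_k$, so the cell sets they occupy are pairwise disjoint. Processing the pentominoes in order, I claim that just before printing $B_i$ every cell occupied by $B_i$ still holds a blank card. Indeed, the only cells written so far belong to $B_1, \dots, B_{i-1}$, and by disjointness none of these lie in $B_i$; the dummy cells and all remaining grid cells are untouched and hence still blank. Consequently, in each of the $25$ stacks formed in Step 1 of the printing protocol, at least one of the two cards is blank: at positions where the template card is non-blank the corresponding area card is blank (it lies in $B_i$), and at positions where the template card is blank that card itself is blank. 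The precondition of the printing protocol is therefore met in every stack, so $P$ can always select and reveal a genuine blank card, and no rejection occurs.

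Finally I would check the two reveal steps at the end. After all $k$ printings, each grid cell has received exactly the number written in the template of the unique pentomino containing it, and by construction this number equals the count of that cell's edges that are pentomino borders---that is, the value prescribed by the extended solution. Since the extended solution agrees with the given clues on every numbered cell, the final number-matching check passes; and since no non-blank template entry is ever printed onto a dummy cell, the dummy cells remain blank and pass their check as well. With all four verification points cleared, $V$ accepts, which establishes perfect completeness. The only subtle point, and the one I would spell out most carefully, is the disjointness argument guaranteeing that the printing precondition holds at every iteration; the remaining steps are routine verification.
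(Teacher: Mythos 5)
Your proof is correct and takes essentially the same approach as the paper's: the partition property of $B_1,\dots,B_k$ gives disjointness, so the printing protocol's precondition holds at every iteration, the template reconstruction makes the per-iteration checks pass, and the final reveals match the extended solution. Your write-up is more detailed than the paper's (the stack-level blank-card argument, the dummy-cell check, and the chosen-cut correctness are spelled out explicitly rather than left implicit), but the underlying argument is identical.
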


\begin{proof}
Suppose $P$ knows an extended solution of the puzzle. Consider each $i$-th iteration of the main protocol.

\begin{itemize}
	\item In Step 3, since $B_1,B_2,...,B_k$ form a partition of the grid, $B_i$ does not overlap with $B_1,B_2,...,B_{i-1}$. Thus, the printing protocol will pass, and the numbers in $B_i$ will be printed on the grid.
	
	\item In Step 4, $P$ reconstructs a template that has just been used, so this step will pass.
\end{itemize}

Therefore, every iteration will pass. After $k$ iterations, all numbers in $P$'s extended solution will be printed on the grid, so all numbers in the original puzzle will match the numbers on the corresponding cards.

Hence, we can conclude that $V$ always accept.
\end{proof}

\begin{lemma}[Perfect Soundness] \label{lem2}
If $P$ does not know a solution of the Five Cells puzzle, then $V$ always rejects.
\end{lemma}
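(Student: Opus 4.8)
The plan is to prove the contrapositive: I will argue that if $V$ always accepts, then the cards printed on the grid at the end must correspond to a genuine extended solution of the puzzle, which in particular means $P$ actually knows a solution. Equivalently, I want to show that whenever $V$ accepts, the sequence of printing operations that passed every verification step forces the placed pentominoes to form a valid partition of the grid whose induced edge-counts match the given clues.

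First I would analyze what each passing verification step guarantees. In Step 4 of each iteration, $V$ re-verifies that all 63 templates are correct, so each template used is genuinely one of the 63 legitimate pentomino templates (with the correct fixed edge-numbers for that type). Next, the key point is the printing protocol itself: by the guarantee established in Section \ref{print}, a successful printing certifies that every position in the selected area corresponding to a non-blank template card was blank immediately before the printing. Since every cell of the grid starts blank, this means that across the $k$ iterations, the $k$ placed pentominoes are pairwise non-overlapping --- an overlap would have caused the chosen cut protocol to fail to find a blank card, so $V$ would have rejected. I would then do a simple counting argument: each pentomino covers exactly five cells, there are $k = mn/5$ iterations, and the pentominoes are disjoint, so together they cover exactly $5k = mn$ cells, which is the whole grid. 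Hence the placed pentominoes form a genuine partition of the $m \times n$ grid into pentominoes.

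The remaining step is to verify the clue constraint. At the end, $P$ reveals every card on a cell that originally contained a number, and $V$ checks these against the puzzle clues; $V$ also checks that all dummy cards are still blank. Because the templates are correct and the pentominoes tile the grid exactly, the number printed on each clue cell equals the number of edges of that cell lying on a pentomino border (including the grid boundary, which is handled by the dummy rows and columns acting as ``outside''). Thus if every revealed number matches its clue, the configuration is precisely an extended solution satisfying all the rules, and a solution of the puzzle can be read off from the pentomino partition. Since $V$ accepted, all these checks passed, so $P$'s partition is a valid solution --- contradicting the assumption that $P$ does not know one.

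The main obstacle I anticipate is handling the boundary and dummy-card bookkeeping rigorously: I must argue that the four appended rows and columns of dummy blank cards correctly account for the ``outer boundary of the grid'' contribution to each cell's edge-count, and that a pentomino whose template overlaps a dummy region would be detected (since a dummy cell left occupied by a non-blank card would fail the final dummy-reveal check, and conversely the edge counts near the boundary come out right only because the dummies are treated as outside). I would also need to be careful that the chosen cut protocol's secrecy does not weaken soundness --- that is, even though $V$ never learns which area or which template was chosen, the verification steps still pin down the outcome uniquely. This follows because soundness depends only on the revealed cards passing the public checks, not on $V$ knowing the hidden choices, so the argument goes through regardless of $P$'s strategy.
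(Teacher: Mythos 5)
Your proof is correct and follows essentially the same route as the paper's: proving the contrapositive, deducing template correctness from the repeated Step 4 checks, obtaining pairwise non-overlap of the placed pentominoes from the printing protocol, using the counting argument $5k = mn$ to force a partition, and invoking the final clue check to conclude a valid extended solution exists. Your extra care with the dummy rows and columns (arguing that a pentomino protruding outside the grid would be caught by the final dummy-reveal check) is in fact slightly more thorough than the paper's own write-up, which leaves that point implicit in its counting step.
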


\begin{proof}
We will prove the contrapositive of this statement. Suppose $V$ accepts, which means the verification passes in all steps. Consider the main protocol.

Since Step 4 passes for every iteration, all 63 templates are correct after each iteration (and also at the beginning of the protocol), which implies the numbers printed in every iteration form a shape of a pentomino and follow the rule of the puzzle.

In Step 3, since the printing protocol passes for every iteration, $B_i$ does not overlap with $B_1,B_2,...,B_{i-1}$ for every $i$. Also, the combined area of $B_1,B_2,...,B_k$ is $5k=mn$, which implies $B_1,B_2,...,B_k$ must form a partition of the grid.

Since the final verification passes, all numbers in the original puzzle match the numbers on the corresponding cards.

Hence, we can conclude that the original puzzle grid is partitioned into pentominoes according to the rule, which means $P$ must know a valid solution of the puzzle.
\end{proof}

\begin{lemma}[Zero-Knowledge] \label{lem3}
During the verification, $V$ obtains no information about $P$'s solution.
\end{lemma}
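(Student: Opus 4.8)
The plan is to construct a probabilistic polynomial-time simulator $S$ that, given oracle access to $V$ as in the zero-knowledge definition but no access to $P$'s extended solution, reproduces exactly the probability distribution of everything $V$ observes during a run of the protocol. The view of $V$ consists of two kinds of data: the outcomes of the pile-shifting shuffles (specifically, the column at which the unique \mybox{1} appears when a row is turned face-up inside each chosen cut protocol) and the face-up values of every card that gets revealed. I would argue that each of these is distributed independently of $P$'s secret, and therefore can be sampled by $S$ from public information alone.

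The first key step is to analyze the chosen cut protocol, which is the only source of nontrivial randomness. By construction, Step 2 applies a pile-shifting shuffle, which cyclically shifts the columns by a uniformly random and unknown amount before Row 2 is revealed in Step 3. Hence the position of the single \mybox{1}, wherever it started, becomes uniformly distributed over all $q$ columns, independently of the hidden index $i$; the same holds for the reversal phase (Steps 5--6) with Row 3. Consequently, for every invocation of the chosen cut protocol---when selecting the $5\times5$ area in Step 1, when selecting a template in Step 2, and inside each of the $pq$ stacks during the printing protocol---$S$ can simply sample the revealed position uniformly at random, matching the real distribution.

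The second key step is to check that every card revealed face-up carries a value that is fixed and public, so that $S$ can output it deterministically. In the printing protocol, Step 2(b) reveals only the card selected by the chosen cut, which (whenever $V$ does not reject) is always a blank card; the template re-verification in Step 4 exposes the $63$ fixed, public templates; the final step reveals the cards on given-number cells, whose values must equal the publicly known clue numbers, and the dummy cards, which are always blank. None of these revealed values depends on which pentomino occupies which position, i.e., on the solution, so $S$ can reproduce them from the puzzle instance alone.

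Combining these, $S$ replays the protocol, drawing every shuffle outcome uniformly and filling in every revealed value with its forced public constant, producing a transcript identically distributed to the real one. The part I expect to require the most care is arguing independence of the \emph{joint} distribution of all shuffle outcomes across the many chosen cut invocations, rather than merely their marginals: because the same underlying sequence of grid cards is reused across all $k$ iterations and the chosen cut protocol restores the sequence to its original cyclic order after each use, I would verify that each shuffle's uniform outcome is independent of all previous outcomes and of the secret, so that sampling them independently in $S$ is legitimate. Once this independence is established, the zero-knowledge property follows.
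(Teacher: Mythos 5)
Your proposal is correct and follows essentially the same approach as the paper's proof: a simulator that samples the uniformly distributed \mybox{1} positions revealed in the chosen cut protocol (Steps 3 and 6) and outputs the deterministic, publicly known patterns for all other reveals (blank cards in Step 2(b) of the printing protocol, the correct templates in Step 4, and the final clue/dummy reveals). Your additional attention to the \emph{joint} independence of shuffle outcomes across invocations is a point the paper glosses over, but it does not change the argument's structure.
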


\begin{proof}
We will prove that the interaction between $P$ and $V$ can be simulated by a simulator $S$ that does not know $P$'s solution. It is sufficient to show that all distributions of cards that are turned face-up can be simulated by $S$.

\begin{itemize}
	\item In Steps 3 and 6 of the chosen cut protocol in Section \ref{chosen}, the \mybox{1} has probability $1/q$ to be at each of the $q$ columns (due to the pile-shifting shuffles), so these two steps can be simulated by $S$.

	\item In Step 2(b) of the printing protocol in Section \ref{print}, there is only one deterministic pattern of the cards that are turned face-up (all blank cards), so this step can be simulated by $S$.
	
	\item In Step 4 of the main protocol, there is only one deterministic pattern of the cards that are turned face-up (all correct templates), so this step can be simulated by $S$.
\end{itemize}

Hence, we can conclude that $V$ obtains no information about $P$'s solution.
\end{proof}

\section{ZKP Protocol for Meadows} \label{meadows}
The key observation is that there are only $n$ different sizes of square in the solution of Meadows. Therefore, we need to construct only $n$ different templates.

In our protocol, $P$ constructs $n$ templates, one for each size of square. Each template has size $n \times n$, and the square is placed at the top-left corner of the template. A cell inside the square is represented by a \mybox{1}, while a cell outside the square is represented by a blank card (see Fig. \ref{figB}).

\begin{figure}
\centering
\begin{tikzpicture}
\node at (0,0) {\mybox{}};
\node at (0.5,0) {\mybox{}};
\node at (1,0) {\mybox{}};
\node at (1.5,0) {\mybox{}};
\node at (2,0) {\mybox{}};

\node at (0,0.6) {\mybox{}};
\node at (0.5,0.6) {\mybox{}};
\node at (1,0.6) {\mybox{}};
\node at (1.5,0.6) {\mybox{}};
\node at (2,0.6) {\mybox{}};

\node at (0,1.2) {\mybox{}};
\node at (0.5,1.2) {\mybox{}};
\node at (1,1.2) {\mybox{}};
\node at (1.5,1.2) {\mybox{}};
\node at (2,1.2) {\mybox{}};

\node at (0,1.8) {\mybox{1}};
\node at (0.5,1.8) {\mybox{1}};
\node at (1,1.8) {\mybox{}};
\node at (1.5,1.8) {\mybox{}};
\node at (2,1.8) {\mybox{}};

\node at (0,2.4) {\mybox{1}};
\node at (0.5,2.4) {\mybox{1}};
\node at (1,2.4) {\mybox{}};
\node at (1.5,2.4) {\mybox{}};
\node at (2,2.4) {\mybox{}};
\end{tikzpicture}
\hspace{1.5cm}
\begin{tikzpicture}
\node at (0,0) {\mybox{}};
\node at (0.5,0) {\mybox{}};
\node at (1,0) {\mybox{}};
\node at (1.5,0) {\mybox{}};
\node at (2,0) {\mybox{}};

\node at (0,0.6) {\mybox{1}};
\node at (0.5,0.6) {\mybox{1}};
\node at (1,0.6) {\mybox{1}};
\node at (1.5,0.6) {\mybox{1}};
\node at (2,0.6) {\mybox{}};

\node at (0,1.2) {\mybox{1}};
\node at (0.5,1.2) {\mybox{1}};
\node at (1,1.2) {\mybox{1}};
\node at (1.5,1.2) {\mybox{1}};
\node at (2,1.2) {\mybox{}};

\node at (0,1.8) {\mybox{1}};
\node at (0.5,1.8) {\mybox{1}};
\node at (1,1.8) {\mybox{1}};
\node at (1.5,1.8) {\mybox{1}};
\node at (2,1.8) {\mybox{}};

\node at (0,2.4) {\mybox{1}};
\node at (0.5,2.4) {\mybox{1}};
\node at (1,2.4) {\mybox{1}};
\node at (1.5,2.4) {\mybox{1}};
\node at (2,2.4) {\mybox{}};
\end{tikzpicture}
\caption{$5 \times 5$ templates of squares with side lengths 2 (left) and 4 (right)}
\label{figB}
\end{figure}
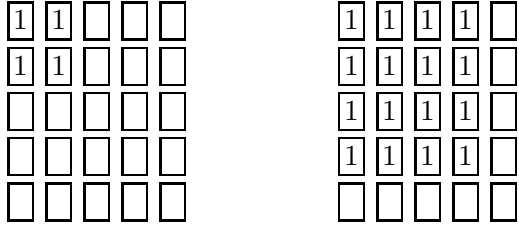

\subsection{Main Protocol}
Initially, $P$ publicly places a blank card on every cell in the grid. To handle edge cases, $P$ publicly appends $n-1$ rows and $n-1$ columns of ``dummy'' blank cards to the bottom and to the right of the grid. Then, $P$ turns all card face-down. We now have an $(2n-1) \times (2n-1)$ matrix of cards.

Also, $P$ constructs $n$ templates of all $n$ sizes of square and lets $V$ verify that all templates are correct (otherwise $V$ rejects).

Let $d_1,d_2,...,d_k$ be the dotted cells in the grid. Suppose that in $P$'s solution, the grid is partitioned into $k$ squares $B_1,B_2,...,B_k$, with each square $B_i$ containing cell $d_i$. For each $i=1,2,...,k$, $P$ performs the following steps.

\begin{enumerate}
	\item Apply the first part of the chosen cut protocol to select an $n \times n$ area whose top-left corner is the top-left corner of square $B_i$. (To be precise, $P$ selects just the top-left corner cell of the area, and the rest will follow).
	\item Apply the first part of the chosen cut protocol to select a template of a square with the same size as $B_i$.
	\item Apply the printing protocol on the selected template and the selected area. Continue the second part of the chosen cut protocol invoked in Step 1 to revert the grid into its original state.
	\item Turn over cards on cells $d_i,d_{i+1},...,d_k$ to show that the card on $d_i$ is a $\mybox{1}$, and the cards on $d_{i+1},d_{i+2},...,d_k$ are all blank cards (otherwise $V$ rejects).
	\item Reconstruct a template that has just been used and replenish the pile of templates with it. Let $V$ verify again that all $n$ templates are correct (otherwise $V$ rejects). Note that $V$ does not know which template has been used.
\end{enumerate}

Finally, $P$ reveals all cards in the grid that they are all $\mybox{1}s$ (otherwise $V$ rejects). $P$ also reveals all dummy cards that they are still blank. If all verification steps pass, then $V$ accepts.

This protocol uses $\Theta(n^3)$ cards and $\Theta(kn^2)$ shuffles.

\subsection{Proof of Correctness and Security}
We will prove the perfect completeness, perfect soundness, and zero-knowledge properties of this protocol.

\begin{lemma}[Perfect Completeness] \label{lem4}
If $P$ knows a solution of the Meadows puzzle, then $V$ always accepts.
\end{lemma}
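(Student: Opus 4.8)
The plan is to mirror the structure of the completeness proof for Five Cells (Lemma~\ref{lem1}), checking that each verification step inside an arbitrary $i$-th iteration passes whenever $P$ genuinely holds a valid solution, and then confirming that the two final reveals succeed. I would begin by fixing a valid Meadows solution in which the grid is partitioned into squares $B_1,\dots,B_k$ with each $B_i$ containing exactly the dotted cell $c_i$, and I would note that the templates constructed in advance are correct by hypothesis, so the initial template-verification step passes.

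Next I would walk through the five steps of one iteration. For Step~3 (the printing protocol), the key point is that because $B_1,\dots,B_k$ form a partition, the cells of $B_i$ are disjoint from those of $B_1,\dots,B_{i-1}$ already printed. Hence the area selected for $B_i$ consists entirely of blank cards at the positions where the template carries \mybox{1}s, so the precondition of the printing protocol in Section~\ref{print} is met and the protocol passes, writing \mybox{1}s onto exactly the cells of $B_i$. For Step~4, since squares $B_1,\dots,B_i$ have been printed and contain dots $c_1,\dots,c_i$ while $c_{i+1},\dots,c_k$ lie in not-yet-printed squares, the dotted cells $c_1,\dots,c_i$ now hold \mybox{1}s and $c_{i+1},\dots,c_k$ are still blank; this is exactly the pattern Step~4 checks, so it passes. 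For Step~5, $P$ reconstructs the template just used and replenishes the pile, so the re-verification of all $n$ templates passes as at the start.

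Finally, after all $k$ iterations the partition property guarantees every grid cell has been covered by exactly one square, so the whole grid now consists of \mybox{1}s and every dummy cell is still blank; both concluding reveals therefore succeed and $V$ accepts. I do not expect a genuine obstacle here: completeness is the routine direction, and the only point requiring care is the bookkeeping in Step~4, namely verifying that the dot-reveal pattern $(c_1,\dots,c_i \text{ printed as } \mybox{1}, \text{ rest blank})$ is an immediate consequence of the fact that each dotted cell $c_j$ lies inside square $B_j$ and squares are printed in index order. Everything else follows directly from the partition hypothesis and the already-established correctness of the printing and chosen cut protocols.
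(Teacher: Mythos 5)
Your proposal is correct and follows essentially the same route as the paper's proof: it checks, iteration by iteration, that Step 3 passes because the partition property prevents overlap, Step 4 passes because exactly the dots $c_1,\dots,c_i$ lie in already-printed squares, Step 5 passes by template reconstruction, and the final reveals succeed since the partition covers the whole grid. The extra bookkeeping you spell out for Step 4 is just a more explicit version of the paper's one-line justification, so there is nothing to add or fix.
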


\begin{proof}
Suppose $P$ knows a solution of the puzzle. Consider each $i$-th iteration of the main protocol.

\begin{itemize}
	\item In Step 3, since $B_1,B_2,...,B_k$ form a partition of the grid, $B_i$ does not overlap with $B_1,B_2,...,B_{i-1}$. Thus, the printing protocol will pass, and all \mybox{1}s in $B_i$ will be printed on the grid.
	
	\item In Step 4, since all \mybox{1}s inside $B_i$ have already been printed, and all \mybox{1}s inside $B_{i+1},B_{i+2},...,B_k$ have not yet been printed, this step will pass.
	
	\item In Step 5, $P$ reconstructs a template that has just been used, so this step will pass.
\end{itemize}

Therefore, every iteration will pass. After $k$ iterations, all cells in the grid will be printed with \mybox{1}s, so the final verification will also pass.

Hence, we can conclude that $V$ always accept.
\end{proof}

\begin{lemma}[Perfect Soundness] \label{lem5}
If $P$ does not know a solution of the Meadows puzzle, then $V$ always rejects.
\end{lemma}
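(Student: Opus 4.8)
The plan is to prove the contrapositive, exactly mirroring the structure of the Five Cells soundness argument (Lemma \ref{lem2}): assume $V$ accepts, so every verification step passes in every iteration, and deduce that $P$ must know a valid solution. The goal is to show that the $k$ printed squares $B_1, B_2, \ldots, B_k$ form a genuine partition of the grid into squares, each containing exactly one dot, which is precisely the Meadows rule.

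First I would use the fact that Step 5 passes in every iteration (and the initial verification passes) to conclude that all $n$ templates are correct throughout the protocol. Since each template encodes a genuine square of $1$s placed at the top-left corner, every printing operation in Step 3 prints exactly the cells of some legitimate square. Next, since the printing protocol passes in every iteration, by the guarantee of the printing protocol (Section \ref{print}) each $B_i$ is printed onto cells that were all blank beforehand, so $B_i$ does not overlap with $B_1, \ldots, B_{i-1}$; hence the $B_i$ are pairwise disjoint squares. The final verification reveals that every cell in the grid carries a \mybox{1}, which means the disjoint squares together cover all $n^2$ cells. Disjoint squares covering the entire grid form a partition of the grid into squares.

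The step I expect to require the most care is establishing the \emph{one-dot-per-square} condition, since unlike Five Cells, Meadows has this extra constraint and it is exactly what Step 4 is designed to enforce. The argument is that because Step 4 passes in iteration $i$, immediately after printing $B_i$ the dotted cells $c_1, \ldots, c_i$ all show \mybox{1} while $c_{i+1}, \ldots, c_k$ are still blank. Comparing the state after iteration $i$ with the state after iteration $i-1$, the only newly-printed cells are those of $B_i$, and the only dot that newly turned from blank to \mybox{1} is $c_i$; therefore $c_i$ lies in $B_i$ and no other dot among $c_1,\ldots,c_k$ lies in $B_i$. Since the $B_i$ partition the grid and each dot $c_j$ falls in exactly one square, and the above shows $c_i \in B_i$ with $B_i$ containing no other dot, each square contains exactly one dotted cell.

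Finally I would assemble these facts: $B_1, \ldots, B_k$ partition the grid into squares, and each $B_i$ contains exactly the one dot $c_i$. This is precisely a valid Meadows solution, so $P$ must know a solution of the puzzle, completing the contrapositive and hence the proof of perfect soundness.
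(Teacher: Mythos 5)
Your proposal is correct and follows essentially the same route as the paper's proof: contrapositive, templates validated via Step 5, non-overlap via the printing protocol, coverage via the final all-\mybox{1}s reveal, and the one-dot-per-square condition via Step 4. Your treatment of Step 4 (comparing consecutive iterations to pin down $c_i \in B_i$) is simply a more detailed justification of what the paper asserts in one line, so there is no substantive difference.
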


\begin{proof}
We will prove the contrapositive of this statement. Suppose $V$ accepts, which means the verification passes in all steps. Consider the main protocol.

Since Step 5 passes for every iteration, all $n$ templates are correct after each iteration (and also at the beginning of the protocol), which implies the \mybox{1}s printed in every iteration form a shape of a square.

Since Step 4 passes for every iteration, the square printed in each $i$-th iteration must contain exactly one dotted cell, which is $d_i$.

In Step 3, since the printing protocol passes for every iteration, $B_i$ does not overlap with $B_1,B_2,...,B_{i-1}$ for every $i$. Also, since the final verification passes, the combined area of $B_1,B_2,...,B_k$ must cover the whole grid, i.e. $B_1,B_2,...,B_k$ form a partition of the grid.

Hence, we can conclude that the puzzle grid is partitioned into squares, with each one contaning exactly one dotted cell, which means $P$ must know a valid solution of the puzzle.
\end{proof}

\begin{lemma}[Zero-Knowledge] \label{lem6}
During the verification, $V$ obtains no information about $P$'s solution.
\end{lemma}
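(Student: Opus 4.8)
The plan is to construct a simulator $S$ that, without knowing $P$'s solution, reproduces the exact probability distribution of every card configuration that $V$ observes face-up during the protocol. As noted in the analysis of Lemma 3, it suffices to enumerate every step at which cards are turned face-up and argue that each such distribution is either a fixed deterministic pattern or a uniform distribution induced by a pile-shifting shuffle, both of which $S$ can clearly sample independently of the witness.

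I would proceed by walking through the protocol step by step, mirroring the three-case breakdown used in Lemma 3 but augmenting it to cover the extra Meadows-specific reveal in Step 4 of the main protocol. First, the chosen cut protocol (invoked in Steps 1 and 2 of each iteration): in its Steps 3 and 6 the single \mybox{1} appears uniformly at each of the $q$ columns because of the preceding pile-shifting shuffle, so $S$ picks a uniformly random column. Second, Step 2(b) of the printing protocol always reveals a blank card, a single deterministic pattern. Third, Step 5 of the main protocol always reveals all $n$ correct templates, again deterministic. The new case is Step 4 of the main protocol: at the $i$-th iteration the dotted cells $c_1,\dots,c_i$ show \mybox{1} and $c_{i+1},\dots,c_k$ show blank cards. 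The crucial point is that the positions of the dotted cells $c_1,\dots,c_k$ are \emph{public} (they are given in the puzzle instance), and the \emph{order} in which they are printed is exactly the fixed indexing $c_1,c_2,\dots$ agreed upon before the protocol begins; hence at iteration $i$ this reveal is a single deterministic pattern that $S$ can reproduce knowing only the public input $i$ and the cell positions. Finally, the closing full-grid reveal (all \mybox{1}s) and the dummy-card reveal (all blank) are likewise deterministic.

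The main obstacle to guard against is the Step 4 reveal: one must be sure it leaks nothing about \emph{which} square $B_i$ was printed or \emph{where} its top-left corner sits, only that the dots processed so far are covered. I would emphasize that Step 4 inspects exactly the dotted cells, whose locations are already known to $V$, and reports only the public bit ``covered/uncovered'' determined entirely by the fixed iteration counter $i$ — the identity and placement of $B_i$ remain hidden inside the grid cards, which are never turned up except in the final uniform all-$\mybox{1}$s reveal. Assembling these observations, since every face-up distribution is simulable, $S$ reproduces $V$'s entire view, and therefore $V$ obtains no information about $P$'s solution.
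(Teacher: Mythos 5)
Your proposal is correct and takes essentially the same approach as the paper's proof: reduce zero-knowledge to simulating the distributions of face-up cards, invoke the Lemma~\ref{lem3} analysis for the chosen cut and printing protocols, and note that Steps 4 and 5 of the main protocol (and the final reveals) each produce a single deterministic pattern. Your explicit justification that the Step-4 pattern is fixed by the public dotted-cell positions and the iteration index $i$ merely spells out what the paper states tersely as ``only one deterministic pattern.''
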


\begin{proof}
We will prove that the interaction between $P$ and $V$ can be simulated by a simulator $S$ that does not know $P$'s solution. It is sufficient to show that all distributions of cards that are turned face-up can be simulated by $S$.

The zero-knowledge property of the chosen cut protocol and the printing protocol has been proved in the proof of Lemma \ref{lem3}, so it is sufficient to consider only the main protocol.

\begin{itemize}
	\item In Step 4, there is only one deterministic pattern of the cards that are turned face-up, so this step can be simulated by $S$.
	
	\item In Step 5, there is only one deterministic pattern of the cards that are turned face-up (all correct templates), so this step can be simulated by $S$.
\end{itemize}

Hence, we can conclude that $V$ obtains no information about $P$'s solution.
\end{proof}

\section{Future Work}
We constructed the printing protocol, which can be used to develop ZKP protocols for decompositon puzzles.\footnote{The printing protocol can also be used to construct a ZKP for Shikaku. However, for an $m \times n$ Shikaku grid, we need to prepare $mn$ templates, each having size $m \times n$, resulting in the total of $\Theta(m^2n^2)$ cards, while the existing ZKP protocol \cite{shikaku} uses only $\Theta(mn)$ cards.} However, the limitation of this protocol is that the number of different possible types of region (which is equal to the number of templates we need to prepare) must be polynomially bounded. A possible future work is to find a technique to deal with decompositon puzzles which the number of different possible types of region are exponentially bounded, e.g. Fillomino.

\subsubsection*{Acknowledgement}
The authors would like to thank the anonymous reviewers of LATINCRYPT 2023 who kindly suggested the main idea of the printing protocol. Without them, this paper would not have been possible. The authors would also like to thank Kyosuke Hatsugai, Yoshiki Abe, and Tomoki Ono for a valuable discussion on this research. This work was supported by JSPS KAKENHI Grant Numbers JP23H00468, JP23H00479, JP23K17455, JP22H03590, JP21H03395, and JST CREST JPMJCR23M2.

\end{document}